\documentclass[11pt]{article}	

\usepackage{amsmath,amssymb,amsfonts,bm}
\usepackage{amsthm}
\usepackage{algorithm}
\usepackage[noend]{algpseudocode}
\usepackage[colorlinks=true,allcolors=blue,pdfusetitle]{hyperref}
\usepackage{url, dsfont}
\usepackage{tabularx,booktabs,multirow}
\usepackage{xspace}
\usepackage{microtype}
\usepackage[nameinlink,capitalize]{cleveref}

\newtheorem{theorem}{Theorem}
\newtheorem{lemma}{Lemma}

\newtheorem{definition}[lemma]{Definition}
\newtheorem{problem}{Problem}

\makeatletter
\AddToHook{env/algorithmic/before}{\def\@currentcounter{ALG@line}}
\makeatother

\crefalias{ALG@line}{line}
\crefname{ALG@line}{line}{lines}
\Crefname{ALG@line}{Line}{Lines}

\newcommand{\citep}{\cite}
\newcommand{\citet}[2]{#2~\cite{#1}}

\usepackage[letterpaper,margin=1in]{geometry}

\newcommand{\IR}{\ensuremath{\mathbb{R}}}

\newcommand{\comm}[1]{}
\newcommand{\set}[1]{\left\{ #1 \right\}}

\newcommand{\provided}{\,|\,}

\newcommand{\OPT}{\ensuremath{\mathop{\mathrm{OPT}}}}

 \newcommand{\eps}{\varepsilon}

\newcommand{\etal}{{et~al.\/}}
\newcommand{\REM}[1]{}

\newcommand{\guess}{\mu}

\newcommand{\ve}{{\bm{e}}}
\newcommand{\1}{\bm{1}}

\renewcommand{\div}{\operatorname{div}}
\newcommand{\rep}{\operatorname{rep}}

\newcommand{\None}{\textsc{None}}
\newcommand{\FindRepair}{\textsc{FindRepair}\xspace}

\title{Fair Diversity Maximization via Local Search\thanks{
A preliminary version of this work was submitted to SODA~2026.
The current version was submitted on September 25, 2026 to ICLR~2027.
}}

\author{Mohammad Ansari \qquad Sina Azizeddin \qquad AmirMohammad Bandari \\ 
Pouria Mahmoudkhan \qquad Hamid Zarabi-Zadeh \\[1em]
Sharif University of Technology
}

\date{}

\begin{document}

\maketitle

\begin{abstract}
Diversity maximization is a fundamental optimization problem with applications in machine learning, data summarization, information retrieval, and recommendation systems. In many such applications, the data are partitioned into groups, and the selected subset must satisfy prescribed group quotas. We study Fair Diversity Maximization: given a set of points in a metric space partitioned into $m$ groups, the goal is to select exactly $k_i$ points from each group $i$ while maximizing the minimum pairwise distance among the selected points.

The best previously known approximation guarantee is $m+1$, which grows linearly with the number of groups. We show that this dependence on $m$ is not fundamental. We present a new local-search framework that yields a $4$-approximation for any constant number of groups, with no restrictions on the metric space or on the size of the selected set. To the best of our knowledge, this is the first constant-factor approximation whose guarantee is independent of the number of groups in this general setting. Our framework maintains all group quotas exactly while progressively eliminating violations of the diversity objective. We further develop a specialized algorithm for two groups that achieves a $2$-approximation, improving the previous best factor of $3$. This factor is optimal: unless $\mathrm{P}=\mathrm{NP}$, no polynomial-time algorithm can achieve an approximation factor strictly better than $2$, even for the unconstrained case.

\end{abstract}

% Keywords: Fairness, Diversity Maximization, Approximation Algorithms, Metric Spaces

% TL;DR: We develop new approximation algorithms for Fair Diversity Maximization that remove the linear dependence on the number of groups: a $4$-approximation for any constant number of groups and an optimal $2$-approximation for two groups.

% ----------------------------- Introduction --------------------------

\section{Introduction}

The increasing scale and complexity of modern datasets across domains such as machine learning have made it increasingly important to extract small yet informative data summaries. Such subset-selection problems arise whenever processing, storing, annotating, or presenting the entire dataset is computationally expensive. A basic challenge is then to decide which points are worth keeping. One natural principle is to favor diversity---choosing elements that are distinct from one another to obtain a varied and nonredundant view of the data. This simple idea has led to a broad line of work on diversity-based subset selection, with applications ranging from data summarization and information retrieval to recommendation systems, active learning, feature selection, training-data curation, and in-context demonstration selection~\citep{ahmed2019data,drosou2012disc,abbar2013real,kunaver2017diversity,zadeh2017scalable,sener2018active,liu2024tsds,wang2025demonstration}.

In the classical diversity maximization problem, also known as max-min diversity, the goal is to select a subset of a prescribed size that maximizes the minimum pairwise distance among the selected elements.
This simple objective has been studied for decades
with its roots in facility-location and dispersion problems~\citep{moon1984analysis,erkut1989analytical,erkut1990discrete,ravi1991facility}. 
Max-min diversification has also been used to construct diverse molecular libraries~\citep{ashton2002diverse}, and has recently appeared in machine-learning applications such as data subset selection and the pruning of redundant visual tokens in large multimodal and video-language models~\citep{fahrbach2025gist,alvar2025divprune,fan2026flashvid}.

In many applications, however, diversity alone is not enough. The objects in the underlying dataset often belong to different groups, and one may want the selected subset to contain a prescribed number of elements from each of them. These groups may correspond, for example, to demographic attributes, data sources, classes, genres, geographic regions, or time periods. An unconstrained diversity objective gives no guarantee that these groups will be adequately represented: a highly diverse solution may still contain very few or even entirely omit elements from a specific group. This issue arises naturally in applications such as training-data selection and pruning, where one may wish to preserve the representation of different classes or subpopulations, active learning, where only a limited number of points can be labeled, recommendation, where one may wish to control the representation of different genres or content providers, and data summarization, where the selected items may need to reflect different sources, categories, or temporal periods~\citep{rolf2021representation,hassan2025rcap,anahideh2022fair,celis2018fair,abbassi2013diversity,mahabadi2023core}.

In this paper, we study {Fair Diversity Maximization}. We are given a set of $n$ elements in a metric space, partitioned into $m$ disjoint groups, together with integers $k_1,\ldots,k_m$. The goal is to select exactly $k_i$ elements from group $i$, for each $i\in[m]$, so that the minimum distance between any two selected elements is as large as possible. Thus, the problem combines exact group-representation constraints with a max-min diversity objective.

The fair diversity maximization problem was first studied by \citet{moumoulidou2020diverse}{Moumoulidou~\etal}. They presented a $4$-approximation algorithm for the case of two groups, and extended their approach to obtain a $(3m-1)$-approximation for general $m$. 
\citet{addanki2022improved}{Addanki~\etal} subsequently
improved the approximation factor to $m+1$ when $m$ is constant. 
They also provided a 3-approximation for the case of two groups.

\subsection{Our Contributions}

All existing approximation guarantees for fair diversity maximization degrade with the number of groups~$m$. We show that this dependence on $m$ is not fundamental. In particular, we obtain a polynomial-time $4$-approximation for any constant number of groups, yielding the first constant-factor guarantee that is independent of $m$. Our result is stated formally below.

\begin{theorem}\label{thm:constant-groups}
	For any constant number of groups $m$, fair diversity maximization admits a polynomial-time $4$-approximation.
\end{theorem}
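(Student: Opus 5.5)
We first guess the optimum. The value $\OPT$ is one of the $O(n^2)$ pairwise distances, so we try each candidate $\guess$ and keep the best feasible output. For a fixed $\guess$, we design a procedure that either returns a set $S$ with exactly $k_i$ points of group $i$ and $\div(S)\ge \guess/4$, or certifies that $\guess>\OPT$. Running it for every candidate and returning the best set found gives \cref{thm:constant-groups}.

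For fixed $\guess$, the algorithm is a local search that maintains the quotas at all times and repairs violations step by step. Call a pair $u,v\in S$ a \emph{conflict} if $d(u,v)<\guess/4$. We start from an arbitrary quota-respecting set $S$, for example one that greedily maximizes diversity within each group. While a conflict exists, we call a subroutine \FindRepair. It searches for an exchange sequence of bounded length, at most some $f(m)$ swaps, that preserves every quota and strictly decreases a potential. The natural potential is the number of conflicting pairs, or lexicographically the number of points involved in conflicts. Each swap removes a point $x$ of group $i$ and inserts a point $y$ of group $j$, so the groups visited along the sequence must close into a cycle on the $m$ group labels. Since $m$ is constant, the cycle structure has constant length. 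With a fixed number of candidate points per step, the exhaustive search costs $n^{O(f(m))}$, which is polynomial. The potential decreases at most $O(k^2)$ times, so the whole procedure runs in polynomial time.

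The core of the correctness proof is the claim that if $\guess\le\OPT$ and $S$ still has a conflict, then an improving repair exists. To prove it, fix an optimal solution $O$ with pairwise distances at least $\guess$. Each ball of radius $\guess/2$ contains at most one point of $O$. Around a point $s\in S$, the ball of radius $\guess/4$ therefore meets at most one optimal point, and an optimal point $o$ can be within $\guess/4$ of several points of $S$ only if those points are pairwise within $\guess/2$. We then build a bipartite exchange graph between $S$ and $O$. An optimal point $o$ of group $j$ is \emph{free} if it is at distance at least $\guess/4$ from every point of $S$, apart from possibly the one being removed. A conflicting point $x$ of group $i$ is removed and replaced along a path that alternates in the graph. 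Counting shows that some group has strictly more free optimal points than points of $S$ that block them. This uses $|O\cap G_j|=|S\cap G_j|$ for every $j$, together with the packing bound that each $s\in S$ blocks at most one optimal point. The surplus yields an augmenting cycle through the groups that returns to group $i$. Shortcutting the cycle shows that its length is at most $m$, so $f(m)=m$ suffices. The factor $4$ comes from the two triangle inequalities used here: inserted points are $\ge\guess/4$ from the retained points, and inserted points are mutually $\ge\guess>\guess/4$ apart because they lie in $O$.

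I expect the main obstacle to be the potential argument. An inserted optimal point must be at distance at least $\guess/4$ from all of $S$, not only from the points being removed, and it must not create new conflicts elsewhere. The fallback is to require the augmenting structure to insert only free points, so no new conflicts arise and the conflict count strictly drops. The hardest part is then proving, via the packing counting above, that a conflict-free augmenting cycle always exists while conflicts remain.
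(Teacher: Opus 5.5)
\paragraph{Verdict: genuine gap.} Your outer loop matches the paper's: guess $\guess$ among the pairwise distances, run a quota-preserving local search, and brute-force over exchanges whose size depends only on $m$. The problem is the step that carries the whole proof, namely that an improving exchange of size bounded in $m$ always exists. You assert it rather than prove it, and the version you assert is false.

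\paragraph{Why the alternating-path claim fails.} Inserting an optimal point $o$ forces removal of \emph{every} selected point within $\guess/4$ of $o$, and there can be one such point in each other group. Compensating for a removal is therefore a branching process, not an alternating path. There is no cycle to shortcut, and $f(m)=m$ does not suffice. Your sentence ``some group has strictly more free optimal points than blockers'' cannot repair this, because it ignores exactly the optimal points with two or more blockers.

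Concretely, take $m=5$, $\guess=\OPT=1$, quotas $(1,2,3,6,12)$, and clusters at mutual distance $1$.
\begin{itemize}
\item One cluster holds the only conflicting pair $q,x$, of colors $1,2$, at distance $0.2$.
\item Each optimal point of color $i\le 4$ forms its own cluster with one selected point of every color $j>i$, at distance $0.15$ from it; these selected points are pairwise $0.3$ apart.
\item Color-$5$ optimal points are isolated.
\end{itemize}
Let $n_i$ be the number of inserted points of color $i$. The quotas force $n_1\ge 1$ if $q$ is removed, $n_2\ge n_1$ (plus one if $x$ is removed), and $n_j\ge\sum_{i<j}n_i$ for $j\ge 3$. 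Since $q$ or $x$ must go, every improving exchange inserts at least $8>m$ points. This holds even though the state is $\guess/2$-separated within groups.

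\paragraph{Missing within-group separation.} You never keep same-group selected points $\guess/2$ apart; your conflicts are only pairs closer than $\guess/4$. Without that, an optimal point can lie within $\guess/4$ of arbitrarily many selected points of one group. For example, take $m=3$ and quotas $(1,1,2t)$. Let $q,x$ be the conflicting pair as above. Place the unique optimal point of color $1$, and likewise that of color $2$, at distance $0.15$ from its own $t$ selected color-$3$ points, which are pairwise $0.3$ apart. Add $2t$ isolated color-$3$ optimal points. Then every improving exchange inserts at least $t+1$ points, so no $n^{O(f(m))}$ search can work.

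\paragraph{What the paper does instead.}
\begin{itemize}
\item It maintains $\guess/2$-separation within groups: GMM at the start, then a same-color conflict radius of $\guess/2$. Hence each optimal point conflicts with at most one selected point of each other group.
\item It encodes each optimal point $p$ by the vector $\rep(p)$: its conflicts in each group, minus one unit of its own group.
\item Disjointness of the balls around optimal points gives $\sum_{p\in S^*}\rep(p)\le-\ve^{(c(q))}$.
\item A minimal subset $P^*$ satisfying this balance has all vectors in $\{-1,0,1\}^m$ and attains equality.
\item The Steinitz lemma, with pigeonhole on partial sums in $[-m,m]^m$, gives $|P^*|\le(2m+1)^m-1$.
\end{itemize}
That bound, exponential in $m$ rather than linear, is what justifies the exhaustive search.
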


We further consider the important special case of two groups, where the
approximation guarantee can be improved to the optimal factor of $2$. This
improves upon the previous best factor of $3$ and matches the optimal
approximation factor for the unconstrained problem.

\begin{theorem}\label{thm:two-groups}
	Fair diversity maximization with two groups admits a polynomial-time $2$-approximation.
\end{theorem}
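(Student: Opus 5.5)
The plan is to guess the optimal value and reduce each guess to a repair problem between a red set and a blue set. Guess $\guess$ among the $O(n^2)$ pairwise distances, so that $\guess \le \OPT$, where $\OPT$ is the optimum of the instance. For a fixed $\guess$ we either output a feasible set (exactly $k_1$ red points and $k_2$ blue points) with minimum pairwise distance at least $\guess/2$, or certify that $\OPT<\guess$. For the certificate, take the largest guess that succeeds.

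First, greedily compute a maximal subset $R$ of group~1 with pairwise distances at least $\guess/2$, and similarly a set $B$ for group~2. Because $R$ is maximal, it is a $\guess/2$-net of group~1. The $\guess$-balls around optimal red points are pairwise disjoint up to boundary, so each contains at most one net point at mutual distance $\geq \guess/2$. This is not quite an injection, so I will instead use greedy nets at distance $\guess$ for counting and at distance $\guess/2$ for separation.

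The cleaner route is the following: take $R$ to be a maximal $\guess$-separated subset of group~1. If $\OPT\ge\guess$, then every optimal red point lies within distance $<\guess$ of a point of $R$. Distinct optimal points are at distance $\ge \guess$, so it is tempting to conclude $|R|\ge k_1$. Counting goes in that direction only through a greedy argument, though: if $|R|<k_1$, some optimal point would be at distance $\ge\guess$ from all of $R$, contradicting maximality. So it seems we may assume $|R|\ge k_1$ and $|B|\ge k_2$, and the only conflicts are red--blue pairs at distance $<\guess/2$.

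Next, form the bipartite conflict graph $G$ on $R\cup B$, with an edge $rb$ whenever $d(r,b)<\guess/2$. Conflicts within a colour are impossible, since each of $R$ and $B$ is $\guess$-separated. Each red point has at most one blue neighbour: two blue neighbours would lie within distance $<\guess$ of each other. Symmetrically, each blue point has at most one red neighbour, so $G$ is a matching. An independent set in a matching can take one endpoint per edge and all unmatched vertices. This gives a feasible selection iff $|R|+|B|-|M|\ge k_1+k_2$ with $|R\setminus V(M)|\le k_1$-type slack, which is decidable by a simple count.

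The main obstacle is showing that when this count fails, $\OPT<\guess$. This is not automatic, because the optimum need not use net points. To handle it, I will use a local-search exchange. Whenever the count fails, map each optimal point to a net point of its colour within distance $<\guess/2$, using a $\guess/2$-separated maximal net instead, where injectivity holds because optimal points are $\guess$-apart. Then repair conflicts along alternating paths, swapping a conflicting net point for a nearby optimal-type substitute found by \FindRepair-style search over the input points. Each swap must strictly decrease the number of conflicting pairs while preserving both quotas. I expect proving termination and preservation of the $\guess/2$ bound under these swaps to be the heart of the argument.
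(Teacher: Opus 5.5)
There is a genuine gap: neither of your two routes reaches a proof. The ``cleaner route'' rests on the claim that a maximal $\mu$-separated subset $R$ of group~1 satisfies $|R|\ge k_1$ whenever $\OPT\ge\mu$, and this claim is false. Maximality only tells you that each optimal red point lies within distance $<\mu$ of some point of $R$. Two optimal points at distance $\ge\mu$ can share the same net point, because the triangle inequality only forces their distance below $2\mu$. For example, take red points $0,1,2$ on a line with $k_1=2$ and $\mu=\OPT=2$: greedy may pick $1$ first and stop with $R=\{1\}$. (If the claim were true, greedy would solve unconstrained max-min diversity exactly.) So the matching count certifies nothing about $\OPT$.

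Your fallback to $\mu/2$-separated nets does restore $|R|\ge k_1$ through the injection, and choosing $k_1$ and $k_2$ net points reproduces the paper's GMM starting solution. But at threshold $\mu/2$ the conflict graph is no longer a matching: a red point can be within $\mu/2$ of several $\mu/2$-separated blue points. You then explicitly defer the only nontrivial step. That step is to show that whenever a conflict remains and $\mu\le\OPT$, there is an exchange that preserves both quotas, creates no new conflicts, and strictly reduces the number of bad points. ``Alternating paths'' and ``optimal-type substitutes'' are never defined. Bounded-size swaps need not exist either, since inserting one red point may force the removal of many blue points, each of which must be compensated.

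The missing idea is the paper's repair lemma.
\begin{itemize}
\item Remove a bad point $q$ (say red). If some red point is at distance $\ge\mu/2$ from all of $S^{-q}$, it simply replaces $q$.
\item Otherwise, greedily grow a maximal set $R$ of points at distance $\ge\mu/2$ from $S^{-q}$ and from each other. These points are necessarily blue, and $|S^{-q}\cup R|=k-1+|R|$.
\item Maximality gives saturation: the $\mu/2$-ball of every optimal point contains a point of $S^{-q}\cup R$. These balls are pairwise disjoint because optimal points are at least $\mu$ apart.
\item $S^{-q}$ has only $k_r-1$ red points, and each lies in the ball of at most one optimal point. Hence some optimal red point $p$ has no red conflict.
\item Charging one point to each of the other $k-1$ optimal balls leaves at most $|R|$ points of $S_b^{-q}\cup R$ in $B(p)$.
\end{itemize}
Therefore $\{p\}\cup(R\setminus B(p))$ is a valid $q$-repair. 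It is $\mu/2$-separated, and its blue points compensate for the blue points of $S^{-q}$ deleted near $p$. Each round removes at least one bad point without creating new ones, so the local search ends within $k$ rounds with a fair set of diversity at least $\mu/2$. Without an existence argument of this kind, your procedure has no guarantee of progress.
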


The approximation factor $2$ is the best possible, since achieving a factor better than
$2$ is NP-hard, even for the unconstrained problem.

A comparison of our results with the best previously known guarantees is provided in \Cref{table:summary}.

\begin{table}[t]
	\centering
	\caption{Summary of the results for fair diversity maximization.}
	\label{table:summary}
	\vspace*{1em}
\begin{tabularx}{.75\columnwidth}{>{\arraybackslash}Xll}
\toprule
\bf Number of & \bf Approximation \qquad\ & \multirow{2}{*}{\bf Reference }  \\
\bf Groups & \bf Factor  \\
\midrule
\multirow{3}{*}{$m = O(1)$} 
		& $3m - 1$     &  \citep{moumoulidou2020diverse}  \\ 
	    & $m+1$     & \citep{addanki2022improved} \\ 	
	    & $4$  & This work  \\ 
\midrule
\multirow{3}{*}{$m=2$ }    
		& $4$      & \citep{moumoulidou2020diverse}   \\ 
	    & $3$     & \citep{addanki2022improved} \\ 	
	    & $2$ (optimal)  & This work \\ 

\bottomrule
\end{tabularx}
\end{table}

\subsection{Technical Overview}

\paragraph{Going beyond the cluster-and-flow approach.}
Both the previous $3m-1$ and $m+1$ approximation algorithms follow the same general paradigm. They first organize points into proximity clusters,
with the intention of selecting at most one point from each cluster, and then use a flow computation to make this selection while satisfying the prescribed group counts exactly~\citep{moumoulidou2020diverse,addanki2022improved}. For this approach to
work, the clustering must satisfy two competing requirements: points chosen
from different clusters must be sufficiently far apart, so that selecting at most one point from each cluster guarantees diversity; at the same time,
there must remain enough clusters for such a selection to satisfy every group quota exactly.
The main difficulty is to satisfy these two requirements simultaneously. In both previous algorithms, doing so leads to clusters that may contain chains of nearby points spanning as many as \(m\) groups. As a result, the diameter of a cluster can grow linearly with \(m\), which is precisely where the \(m\)-dependence enters their approximation guarantees. 

Our approach, on the other hand, avoids trying to enforce diversity and fairness
simultaneously in one shot. Instead, we begin with a solution that already
satisfies all group quotas and gradually improve its diversity while
preserving fairness throughout. The difficulty is that even a local exchange
must preserve the exact group counts, and it is not clear whether an
appropriate local modification always exists. Establishing the existence of
such exchanges is the main technical challenge behind our framework.

\paragraph{A unified local-repair framework.}
Both our results are obtained through a unified local-search framework. We work with two distance scales $\alpha\leq\beta$, where $\alpha$ is the diversity we aim to achieve, while $\beta$ is a stronger separation maintained between points belonging to the same group. Starting from a fair solution satisfying this within-group separation, we repeatedly choose a selected point $q$ that has another selected point at distance less than $\alpha$. We then replace $q$, together with any points that conflict with the replacement, by a carefully chosen set of new points. We call such a set a \emph{$q$-repair}. Informally, a $q$-repair must be mutually compatible and must contain enough points from every group to compensate for all points from that group that are removed. Thus, applying a repair preserves both exact fairness and the within-group separation, while strictly decreasing the number of selected points that have at least one neighbor with distance less than $\alpha$. 
Consequently, the local search terminates with a fair solution of diversity at least $\alpha$.

\paragraph{Technical novelty.}
The main technical challenge is to prove that a suitable local improvement
always exists.
For any constant number of groups, this can be a delicate procedure, because inserting a single point may simultaneously force the removal
of points from several different groups. We capture these interactions using
\emph{repair vectors}, which record the surplus or deficit in each group caused
by an exchange. We show that a minimal collection of such vectors satisfying
the required balance has a simple structure and is exactly
balanced. A Steinitz-type vector-balancing argument then implies that this
collection has size bounded solely as a function of $m$. Consequently, for
constant $m$, a suitable local improvement can be found by exhaustive search
in polynomial time.

The two-group case allows for a sharper argument. Since there is only one
other group whose balance needs to be restored, we can greedily build a maximal
auxiliary set from that group. The maximality of this set gives us a useful
saturation property. We then compare this set with an optimal solution and use
the disjointness of suitable neighborhoods around optimal points to show, by a
counting argument, that one of these points can be used to form a valid local
improvement. This leads to the tight $2$-approximation.

\subsection{Related Work}

The diversity maximization problem was first studied in the 1980s in the context of facility location and dispersion~\citep{moon1984analysis,erkut1989analytical}.
\citet{erkut1990discrete}{Erkut} showed that the problem is NP-hard. 
Subsequently, \citet{ravi1991facility}{Ravi~\etal} 
gave a polynomial-time $2$-approximation in any metric
space and showed that no approximation factor better than $2$ is possible
unless $\mathrm{P}=\mathrm{NP}$. 

The fair variant of diversity maximization was introduced by \citet{moumoulidou2020diverse}{Moumoulidou~\etal}, who developed a $4$-approximation for two groups and a $(3m-1)$-approximation for general $m$. Subsequently, \citet{addanki2022improved}{Addanki~\etal} improved the guarantee for constant $m$ to $m+1$ and obtained a $3$-approximation for the two-group case.

Several stronger approximation guarantees have been obtained under additional assumptions, such as restricting the problem parameters, relaxing the fairness constraints, or imposing additional structure on the metric.
\citet{moumoulidou2020diverse}{Moumoulidou~\etal} obtained a
$5$-approximation when $m$ is constant and $k=o(\log n)$. 
\citet{addanki2022improved}{Addanki~\etal} showed that a $2$-approximation is possible
when the fairness constraints are required only in expectation, and gave a
$6$-approximation while selecting at least a $(1-\eps)$ fraction of the
required number of elements from every group. For Euclidean metrics, they also
showed that the problem is exactly solvable in one dimension and admits a
$(1+\eps)$-approximation in constant dimensions.

\citet{wang2023max}{Wang~\etal} considered a more general setting in which
each group has lower and upper representation bounds, obtaining a
$(5 + \eps)$-approximation in polynomial time when $m=O(1)$ and
$k=o(\log n)$. 
\citet{gaofair}{Gao and Borodin} improved the approximation
guarantees in refined and relaxed metric spaces, while \citet{kurkure2024faster}{Kurkure~\etal} developed constant-factor near-linear-time
algorithms for Euclidean instances. 
More recently, \citet{adriaens2025few}{Adriaens and Tatti} obtained improved guarantees when only a
constant number of representatives is required from each group, while \citet{zheng2026tabu}{Zheng~\etal} studied exact and metaheuristic methods for Fair
Max-Min Diversity.

Beyond Max-Min Diversity, several other distance-based diversity objectives
have been studied extensively. A prominent example is Max-Sum Diversity,
which maximizes the sum of pairwise distances among the selected elements.
\citet{ravi1991facility}{Ravi~\etal} gave a $4$-approximation for this objective, which
was later improved to a $2$-approximation by \citet{hassin1997approximation}{Hassin~\etal}. 
\citet{abbassi2013diversity}{Abbassi~\etal} proposed a local-search algorithm achieving
a tight $2$-approximation under partition-matroid constraints. 
\citet{borodin2017max}{Borodin~\etal} further generalized this setting by giving a
$2$-approximation for maximizing the sum of a monotone submodular function
and Max-Sum diversity. 
\citet{ceccarello2018fast,ceccarello2020general}{Ceccarello~\etal} subsequently developed
coreset-based algorithms for diversity maximization under matroid constraints.

A separate line of work has focused on diversity maximization in large-scale
and restricted computational models. 
\citet{indyk2014composable}{Indyk~\etal} introduced composable coresets with
constant-factor guarantees for several diversity objectives, and \citet{aghamolaei2015diversity}{Aghamolaei~\etal} subsequently improved the guarantees
for several of these objectives. Diversity maximization has also been studied
in MapReduce, streaming, and sliding-window settings~\citep{ceccarello2016mapreduce,borassi2019better}. For Max-Min Diversity in
particular, \citet{haqi2023almost}{Haqi and Zarrabi-Zadeh} obtained an
almost-tight $(2+\eps)$-approximation in the massively parallel computation
model, improving the previous factor of $6$ due to
\citet{indyk2014composable}{Indyk~\etal}. More recently, diversity maximization
has also been considered in fully dynamic settings~\citep{pellizzoni2025dynamic}.

Scalability has also been studied in the presence of group-representation
constraints. \citet{mahabadi2023core}{Mahabadi and Trajanovski} developed
coresets for fair Max-Sum and sum-of-nearest-neighbor diversity, 
while \citet{wang2023fair}{Wang~\etal} designed one-pass streaming and sliding-window
algorithms for Fair Max-Min Diversity. 
Most recently, \citet{ahmadvand2026composable}{Ahmadvand~\etal} developed composable coresets for fair diversity maximization.

Fairness and representation constraints have also been studied under other
selection objectives and notions of fairness. 
\citet{celis2018fair}{Celis~\etal} studied fair determinantal data summarization,
while \citet{wang2021fairrepresentative}{Wang~\etal} considered fair
representative subset selection under submodular objectives. More recently,
\citet{li2025individual}{Li and Wang} studied diversity maximization under
\emph{individual} rather than group fairness, obtaining bicriteria
approximation algorithms for Max-Min and Max-Sum diversity.
Related fairness constraints have also been considered in multiwinner voting
and submodular maximization~\citep{celis18,HalabiMNTT20}, as well as in
problems such as fair $k$-center clustering~\citep{chen2016matroid,kleindessner2019fair,jones2020fair}, fair set
cover~\citep{dehghankar2024fair}, and fair matching~\citep{garcia2020fair}.

% ---------------------------- Preliminaries --------------------------

\section{Preliminaries}
\label{sec:prel}

Let $(X, d)$ be a metric space, 
where $X$ is a set of $n$ points, 
and $d:X\times X\rightarrow \IR_{\geq 0}$ is a metric distance function. 
The notion of diversity is defined as follows.

\begin{definition}[Diversity]
Given a subset $S \subseteq X$, 
the \emph{diversity} of $S$, denoted by $\div(S)$, is 
the minimum pairwise distance between distinct elements in $S$, i.e.,
$$\div(S)=\min_{p,q\in S, p \neq q} d(p, q).$$
\end{definition}

The classical diversity maximization problem asks for a subset 
$S \subseteq X$ of $k$ points that maximizes $\div(S)$. 
The standard GMM algorithm~\citep{ravi1991facility}, 
presented in \Cref{alg:gmm}, achieves a $2$-approximation.

 \begin{algorithm}[ht]
     \caption{\textsc{GMM}$(X, k)$}
     \label{alg:gmm}
     \begin{algorithmic}[1]
         \Require A point set $X$ in a metric space and a parameter $k$
         \Ensure A subset $S\subseteq X$ of $k$ points
         \State $S \gets \{\text{an arbitrary point of } X\}$
         \For{$i = 2$ \textbf{to} $k$}
             \State find a point $p \in X \setminus S$ maximizing $d(p, S)$
             \State $S \gets S\cup \{p\}$
         \EndFor
         \State \Return $S$
     \end{algorithmic}
 \end{algorithm}

Incorporating fairness constraints, 
the main problem studied in this paper is formally defined as follows.

\begin{problem}[Fair Diversity Maximization]
	\label{problem: fdm}
	Given a set $X$ partitioned into $m$ disjoint groups $X_1, X_2, \ldots, X_m$ and positive integers $k_1, k_2, \ldots, k_m$, where $k = \sum_{i=1}^m k_i$, a subset $S \subseteq X$ is called \emph{fair} if $|S \cap X_i| = k_i$ for all $i \in [m]$.
	The goal is to find a fair subset $S \subseteq X$ of size $k$ that maximizes $\div(S)$.
\end{problem}

In the remainder of this section, we introduce the notation and tools used in our algorithms and analyses.
For each point $p \in X$, we denote the index of the group containing $p$ with $c(p)$. 
For a more intuitive representation, we refer to $c(p)$ as the \emph{color} of $p$.
Whenever a set is indexed, for example $S_i$, this means $S\cap X_i$,
i.e., the points in $S$ with color $i$.

\begin{definition}[Color Separation]
	A set $S\subseteq X$ is $\beta$-color separated if for every color $i\in [m]$
	$$\div(S_i) \ge \beta$$
\end{definition}

For set $S$ and point $q$ we define $S^{-q} = S \setminus \{q\}$.
We further define open balls and conflict sets as tools to enforce diversity.

\begin{definition}[Open Ball]
	Let $\beta \geq \alpha > 0$ be fixed parameters. 
	For each color $i\in [m]$ and point $p$ we define
	$$B_i(p)= \begin{cases} \{v\in X_i :d(v,p)<\alpha\}, & i\neq c(p),\\
						    \{v\in X_i :d(v,p)<\beta\}, & i = c(p), \end{cases}$$
	letting $B(p) = \bigcup_{i\in [m]} B_i(p)$ be the union over all colors.
\end{definition}

\begin{definition}[Conflict Set]
	Let $\beta \geq \alpha > 0$ be fixed parameters. 
	Given set $S$ and point $q\in S$, for each color $i\in [m]$ and point $p$ we define
	$$\Gamma_i^q(p,S) = B_i(p) \cap S^{-q},$$
	again letting $\Gamma^q(p,S) = \bigcup_{i\in [m]} \Gamma_i^q(p,S)$ be the union over all colors.
\end{definition}

When it is clear from context, we may remove the $S$ from $\Gamma_i^q(p,S)$.
We may further expand these definitions to include sets of points $P$ to denote the union,
i.e., $ \Gamma_i^q(P,S) =  \bigcup_{p\in P} \Gamma_i^q(p,S)$.
Moreover, %if $\alpha$ and $\beta$ are clear by context,
a point $p \in S$ is called \emph{bad} with respect to $S$ if $|S \cap B(p)| > 1$. 
In simpler terms, $p$ is called bad with respect to $S$ 
if there is a point in $S$ (other than $p$) that is ``too close'' to $p$.
It is important to note that if $S$ is $\beta$-color separated, 
a bad point may only be close to another point with a different color in $S$.

Throughout this paper, 
we denote by $S^*$ an optimal solution, 
and by $\OPT$ its diversity, i.e., $\div(S^*) = \OPT$. 
We refer to each point in $S^*$ as an \emph{optimal point}.

% ---------------------------- Framework ---------------------------

\section{Local-Repair Framework}
\label{sec:framework}

In this section, we present a unified local-repair framework 
that underlies our approximation algorithms. 
We use this framework in the next two sections to obtain 
a $4$-approximation for any constant number of groups, and
a tight $2$-approximation for the case of two groups.

Our local-repair framework is given in \Cref{alg: framework}. 
In this algorithm, we assume that we are given a value $\guess$ satisfying $\guess \leq \OPT$.
We are also given two lower bounds $\alpha \le \beta \le \frac{\guess}{2}$.

\begin{algorithm}[t]
    \caption{\sc Local-Repair Framework} \label{alg: framework}
    \begin{algorithmic}[1]
		
        \Require A point set $X$, a guess $\guess\le\OPT$, two values $\frac{\guess}{c} \le \alpha \le \beta \le \frac{\guess}{2}$, for some constant $c \geq 2$
        \Ensure A fair subset $S\subseteq X$ with $\div(S)\geq \alpha$.
		\For{$i\in [m]$}
        \State let $S_i = \text{\sc GMM}(X_i, k_i)$ 
        \EndFor
        \State let $S = \bigcup_{i\in [m]} S_i$ 
        \While {there exists bad point $q$ in $S$}
			\State $P\gets \textsc{FindRepair}(S,q,\alpha,\beta)$
			\State $ S \gets (S^{-q} \setminus  \Gamma^q(P)) \cup P $
		\EndWhile
		\State \Return $S$
    \end{algorithmic}
\end{algorithm}

We first construct a fair $\beta$-color separated set $S$ by 
applying the GMM algorithm (\Cref{alg:gmm})
independently to each color group 
to select $k_i$ points from color $i$.

We then start a local repair to gradually improve the diversity of $S$. 
In each iteration, we identify a bad point $q \in S$ and replace a subset of points in $S$, along with $q$
with another set of points such that the resulting solution remains fair while strictly decreasing the number of bad points. 
To facilitate this replacement, we introduce the notion of a $q$-repair set.

\begin{definition}[$q$-repair]
	Let $q\in S$ be a bad point. A set $P$ is a $q$-repair if.
	\begin{itemize}
		\item $\div(P) \ge \alpha$,
		\item $P$ is $\beta$-color separated,
		\item For every color $i\in [m]$, 
		$|\Gamma^q_i(P)| + \1_{[i=c(q)]} \le |P_i|$
	\end{itemize}
\end{definition}

The first two conditions ensure that no two points of $P$ violate the required
separation. 
Points of different colors are at distance at least $\alpha$, 
while
points of the same color are at distance at least $\beta$. 
The third condition
ensures that inserting $P$ preserves the group quotas. 
After removing $q$ and
all points in $\Gamma^q(P)$ from $S$ and inserting $P$, 
the resulting set
contains at least $k_i$ points of every color $i$. 
This also explains the
terminology \emph{conflict set}, since the points in $\Gamma^q(P)$ are exactly
the selected points 
that prevent $P$ from being inserted while maintaining the
required separation.

The \textsc{FindRepair} subroutine is responsible for finding a suitable $q$-repair set.
This is the setting-specific part of the framework, and its implementation can be adapted to the problem at hand.
Specifically, 
the user of this framework must design a proper \FindRepair subroutine
and appropriate parameters $\alpha$ and $\beta$ depending on $\guess$
such that \FindRepair always
returns a $q$-repair whenever $\guess \leq \OPT$.

\begin{theorem}
Given a polynomial-time subroutine $\textsc{FindRepair}$, a guess $\guess\leq\OPT$, and appropriately chosen parameters $\alpha$ and $\beta$ for $\textsc{FindRepair}$ satisfying $\frac{\guess}{c}\leq\alpha\leq\beta\leq\frac{\guess}{2}$ for some constant $c\geq 2$, \Cref{alg: framework} finds a fair set $S$ with $\div(S)\geq\frac{\guess}{c}$ in polynomial time.
\end{theorem}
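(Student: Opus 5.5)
We argue by maintaining an invariant and exhibiting a potential that strictly decreases. The invariant is: \emph{$S$ is fair and $\beta$-color separated}. The potential is the number of bad points of $S$. Once both are established, the loop runs at most $k$ iterations, since the initial number of bad points is at most $k$. Each iteration calls the polynomial-time \FindRepair and performs polynomial-time set updates, so the total running time is polynomial.

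\emph{Initialization.} Each $X_i$ contains $S^*_i$, a set of $k_i$ points with pairwise distances at least $\OPT\ge\guess$. The standard GMM guarantee~\citep{ravi1991facility} then gives $\div(\textsc{GMM}(X_i,k_i))\ge \OPT/2\ge \guess/2\ge\beta$, which is exactly the separation used in the definition of $B_i$ for same-colored points. Hence the initial $S$ is fair and $\beta$-color separated.

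\emph{One iteration.} Let $S'=(S^{-q}\setminus\Gamma^q(P))\cup P$, where $P$ is a $q$-repair (assumed returned by \FindRepair whenever $\guess\le\OPT$).

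\emph{(i) Separation.} Take any $x\in S^{-q}\setminus\Gamma^q(P)$ and $p\in P$. Since $x\notin B(p)$, we have $d(x,p)\ge\alpha$ if the colors differ and $d(x,p)\ge\beta$ if they agree. Pairs inside $P$ satisfy the same bounds by the first two repair conditions. Pairs inside the old set already satisfied them by the invariant.

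\emph{(ii) Counts.} Within color $i$ we remove at most $|\Gamma^q_i(P)|+\1_{[i=c(q)]}\le|P_i|$ points. One should check that $P$ is disjoint from the retained part of $S$, i.e.\ $P\cap (S^{-q}\setminus\Gamma^q(P))=\emptyset$: a point $p\in P\cap S^{-q}$ satisfies $d(p,p)=0<\alpha$, so $p\in B(p)$ and hence $p\in\Gamma^q(P)$. Thus $|S'_i|\ge k_i$. To restore exact fairness, discard surplus points of color $i$ from $P_i$ (or simply any surplus points). Deleting points never creates violations or new bad points, so (i) and the bad-point bound below survive, and $S'$ is fair.

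\emph{(iii) Badness.} Every point of $P$ is non-bad in $S'$ by (i), since all its conflicts with the old set were removed. An old point $x\in S'\setminus P$ gains no new close neighbor from $P$, again by (i), so it is bad in $S'$ only if it was already bad in $S$. The point $q$ was bad and has been removed. Hence the number of bad points strictly decreases.

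\emph{Termination.} When the loop ends, $S$ has no bad points and is $\beta$-color separated. Every pair of distinct points is therefore at distance at least $\alpha$ (different colors) or at least $\beta\ge\alpha$ (same color), so $\div(S)\ge\alpha\ge\guess/c$.

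The main subtle step is (ii) together with (iii): the fairness-trimming must be done by deletion only, and we must verify that $P$ does not overlap the retained set, so that the counting inequality in the repair definition genuinely yields at least $k_i$ points of each color. Everything else follows directly from the definitions of $B$ and $\Gamma$.
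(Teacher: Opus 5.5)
Your proof is correct and follows essentially the same route as the paper. The common elements are the invariant that $S$ is fair and $\beta$-color separated (initialized via the GMM guarantee), the balance condition of a $q$-repair giving at least $k_i$ points of each color followed by trimming, and the strict decrease in the number of bad points, which bounds the loop by $k$ iterations. Your explicit check that $P$ is disjoint from $S^{-q}\setminus\Gamma^q(P)$ (via $p\in B(p)$) is a detail the paper leaves implicit but which the count needs; in (i), though, note that the invariant only certifies the same-color bound $\beta$ for old pairs, not the cross-color bound $\alpha$.
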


\begin{proof}
The correctness of the framework follows from a simple invariant. 
Initially, GMM guarantees diversity at least $\guess/2\geq\beta$ within each color
group, so $S$ is $\beta$-color separated. In each iteration, let $P$ be the
$q$-repair returned by \textsc{FindRepair}. By the definition of
$\Gamma^q(P)$, removing $q$ and $\Gamma^q(P)$ eliminates every point that
would conflict with a point of $P$.

 Since $P$ has diversity at least $\alpha$
and is itself $\beta$-color separated, inserting $P$ creates no new bad
points, while $q$ is removed.

The balance condition in the definition of a $q$-repair guarantees that, for
every color $i$,
\[
|S_i|-|\Gamma_i^q(P)|-\1_{[i=c(q)]}+|P_i|\geq k_i.
\]
Hence, any excess points can be removed while preserving both fairness and
$\beta$-color-separation. Therefore, the number of bad points strictly
decreases in every iteration. Since there are at most $k$ bad points, the
algorithm terminates after at most $k$ iterations, at which point
$\div(S)\geq\alpha \geq \guess / c$.

It remains to explain how the value $\mu$ is chosen. Let
$
D:=\{d(u,v) \provided u,v\in X\}
$
be the set of all pairwise distances.
Since $\OPT\in D$, we may run \Cref{alg: framework} for every
$\mu\in D$ and return the solution with maximum diversity.
Alternatively, we can use binary search over $D$ to find the desired solution more efficiently.
Since for every
guess $\mu\leq\OPT$ the algorithm returns a fair solution of diversity at least
$\alpha \geq \mu / c$, the run with $\mu=\OPT$ yields a $c$-approximation.
\end{proof}

% ---------------------------- Constant Groups ---------------------------

\section{4-Approximation for any Constant $m$}

In this section we show how to use the local-repair framework developed in the previous section to obtain a 4-approximation for fair diversity maximization for any constant number of groups.
To this end, we need to design a \textsc{FindRepair} subroutine 
that finds a $q$-repair for $(\alpha,\beta)=(\frac\guess4,\frac\guess2)$
in polynomial time.

We first introduce the repair vector of $p$.

\begin{definition}
	For a point $p$, we define \emph{repair vector} $\rep(p) \in \mathbb{Z}^m$ to be
	$$\rep_i(p)= |B_i(p)\cap S^{-q}| - \1_{[i=c(q)]}.$$
\end{definition}
The interpretation is that inserting $p$ into $S^{-q}$ supplies one unit of its own color while consuming one
unit for every conflicting point in $S$ that must be removed.

\begin{lemma}\label{lemma: blocker disjoint}
	For any two optimal points $p_1,p_2 \in S^*$,
	the sets $B(p_1)$ and $B(p_2)$ are disjoint.
\end{lemma}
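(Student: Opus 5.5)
We argue by contradiction using the triangle inequality. In this section $(\alpha,\beta)=(\frac{\guess}{4},\frac{\guess}{2})$ and $\guess\le\OPT$, so every ball radius is at most $\beta=\frac{\guess}{2}\le\frac{\OPT}{2}$. Suppose some point $v$ lies in $B(p_1)\cap B(p_2)$ with $p_1\neq p_2$ distinct optimal points. Then $v\in B_i(p_1)$ and $v\in B_i(p_2)$ for the single color $i=c(v)$, because the sets $B_i(\cdot)$ for different $i$ live in the disjoint groups $X_i$.

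By the definition of open balls, $d(v,p_1)<r_1$ and $d(v,p_2)<r_2$. Here $r_j=\beta$ if $i=c(p_j)$ and $r_j=\alpha$ otherwise. In every case $r_j\le\beta\le\frac{\guess}{2}$. The triangle inequality then gives
\[
d(p_1,p_2)\le d(p_1,v)+d(v,p_2)<r_1+r_2\le 2\beta\le\guess\le\OPT .
\]
On the other hand, $p_1,p_2\in S^*$ are distinct, so $d(p_1,p_2)\ge\div(S^*)=\OPT$. This is a contradiction, so the two balls are disjoint.

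No step here is a real obstacle. The only points to check are that both radii are bounded by $\beta$ regardless of the colors involved, and that the inequality is strict because the balls are open. Strictness is exactly what lets the case $\beta=\frac{\guess}{2}$ with $\guess=\OPT$ go through. The argument uses only $\alpha\le\beta\le\frac{\guess}{2}\le\frac{\OPT}{2}$, so it also holds for the parameter choices of the two-group section.
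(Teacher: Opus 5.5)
Your proof is correct and matches the paper's argument: both take a common point $v$ of the two balls, bound $d(p_1,v)$ and $d(v,p_2)$ strictly by $\frac{\guess}{2}$, and use the triangle inequality to contradict $d(p_1,p_2)\ge\OPT\ge\guess$. Your explicit check that every radius is at most $\beta$ regardless of the colors involved is a detail the paper leaves implicit.
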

\begin{proof}
	Suppose there exists a $v \in B(p_1) \cap B(p_2)$. Then,
	$$\guess \le d(p_1,p_2) \le d(p_1,v) + d(v,p_2) < \frac12\guess + \frac12\guess,$$
	which is a contradiction.
\end{proof}

\begin{lemma}
	For the optimal set $S^*$ we have
	$$\sum_{p\in S^*} \rep(p) \le - \ve^{(c(q))},$$
	where $e_j$ is the $j$-th unit vector, and the inequality is coordinate-wise. 
\end{lemma}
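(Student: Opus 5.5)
We prove the inequality coordinate by coordinate, writing out the sum of repair vectors color by color. Fix a color $i\in[m]$. By definition,
\[
\sum_{p\in S^*} \rep_i(p) \;=\; \sum_{p\in S^*} |B_i(p)\cap S^{-q}| \;-\; |S^*_i| ,
\]
since the indicator term contributes $1$ exactly for the points $p\in S^*$ with $c(p)=i$. Here $|S^*_i|=k_i$ because $S^*$ is fair. (We read the subtracted indicator as $\1_{[i=c(p)]}$, which matches the stated interpretation that inserting $p$ "supplies one unit of its own color". The displayed formula in the definition has $\1_{[i=c(q)]}$, which we take to be a typo, since the claimed bound would not hold otherwise.)

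So the statement reduces to showing
\[
\sum_{p\in S^*} |B_i(p)\cap S^{-q}| \;\le\; k_i - \1_{[i=c(q)]}.
\]

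We get this from \Cref{lemma: blocker disjoint}. The sets $B(p)$ for $p\in S^*$ are pairwise disjoint, so the sets $B_i(p)\cap S^{-q}$ are also pairwise disjoint subsets of $S^{-q}\cap X_i$. Their total size is therefore at most $|S^{-q}\cap X_i|$.

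It remains to bound $|S^{-q}\cap X_i|$. The algorithm maintains the invariant that $S$ is fair before each repair, so $|S_i|=k_i$. This uses the framework proof: excess points are pruned so that $|S_i|=k_i$ exactly after every iteration. Removing $q$ lowers the count of color $c(q)$ by one, which gives $|S^{-q}\cap X_i| = k_i - \1_{[i=c(q)]}$. Combining the two bounds yields the desired inequality in every coordinate, and hence $\sum_{p\in S^*}\rep(p)\le -\ve^{(c(q))}$.

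We note that \Cref{lemma: blocker disjoint} requires radii $\alpha\le\beta\le\guess/2$ and $\guess\le\OPT$. These hold for the parameters $(\alpha,\beta)=(\guess/4,\guess/2)$ chosen in this section.

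The main obstacle is only bookkeeping: justifying that $S$ is exactly fair, rather than merely having at least $k_i$ points of each color, at the moment $\rep$ is evaluated. If $S$ were allowed to carry surplus points, the bound would fail. We therefore state explicitly that the framework trims $S$ back to exactly $k_i$ points per color after each repair, as its proof describes.
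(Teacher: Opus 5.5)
Your proof is correct and follows the paper's argument: disjointness of the balls $B(p)$ for $p\in S^*$ bounds $\sum_{p\in S^*}|B_i(p)\cap S^{-q}|$ by $|S^{-q}_i| = k_i-\1_{[i=c(q)]}$, and subtracting $|S^*_i|=k_i$ gives the claim coordinate-wise. Your reading of the indicator as $\1_{[i=c(p)]}$ is exactly what the paper's proof uses implicitly, and your remark that $S$ must be exactly fair (via the trimming step in the framework) makes explicit an assumption the paper leaves unstated.
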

\begin{proof}
	According to \Cref{lemma: blocker disjoint} each $v\in S^{-q}$
	is inside at most one $B(p)$ for $p\in S^*$.
	This means that for every color $i$,
	$$\sum_{p\in S^*} |B_i(p)\cap S^{-q}| \le |S^{-q}_i| = k_i - \1_{[i=c(q)]}.$$
	Since $|S^*_i| = k_i$ then we must have
	$$\sum_{p\in S^*} \rep_i(p) = \sum_{p\in S^*} |B_i(p)\cap S^{-q}| - |S^*_i| \le - \1_{[i=c(q)]},$$
	which is exactly $-\ve^{(c(q))}$.
\end{proof}

This lemma shows that $S^*$ is a $q$-repair for all $S$ and $q$.
Let $P^* \subseteq S^*$ be the smallest subset of $S^*$ satisfying
$$\sum_{p\in P^*} \rep(p) \le - \ve^{(c(q))}.$$
We will refer to this inequality as the \emph{balance inequality}.
In the following lemmas, we will show that $|P^*|$ is bounded by a function of only $m$,
independent of $k$ and $n$.

\begin{lemma}\label{lemma: vector binary}
			Given $p\in P^*$, for every color $i$, we must have $|B_i(p)\cap S^{-q}| \le 1$,
	and in particular, $|B_{c(p)}(p)\cap S^{-q}| = 0$.
\end{lemma}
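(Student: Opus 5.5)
The plan is to prove the two claims separately. The bound $|B_i(p)\cap S^{-q}|\le 1$ for colors $i\neq c(p)$ should follow purely from geometry and the color separation of $S$. The claim $|B_{c(p)}(p)\cap S^{-q}|=0$ should follow from the minimality of $P^*$. Throughout I read the repair vector according to its stated interpretation, namely $\rep_i(p)=|B_i(p)\cap S^{-q}|-\1_{[i=c(p)]}$: inserting $p$ supplies one unit of its own color. I also use that $S$ is $\beta$-color separated, which is the invariant maintained by the framework. Here $(\alpha,\beta)=(\frac\guess4,\frac\guess2)$.

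\emph{Step 1 (other colors).} Fix $i\neq c(p)$ and suppose $u,v\in B_i(p)\cap S^{-q}$ are distinct. Both lie within distance $\alpha$ of $p$, so by the triangle inequality
\[
d(u,v)\le d(u,p)+d(p,v)<2\alpha=\tfrac\guess2=\beta .
\]
Both $u$ and $v$ belong to $S_i$, so this contradicts $\div(S_i)\ge\beta$. Hence $|B_i(p)\cap S^{-q}|\le 1$. This step does not use minimality, and it is exactly where the choice $\beta=2\alpha$ matters.

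\emph{Step 2 (own color, via minimality).} The same argument fails for $i=c(p)$, because that ball has radius $\beta$ and two points in it may be up to $2\beta$ apart. So I will argue by contradiction using the minimality of $P^*$. Suppose $|B_{c(p)}(p)\cap S^{-q}|\ge 1$. Then $\rep_{c(p)}(p)\ge 1-1=0$, and for every $i\neq c(p)$ we have $\rep_i(p)=|B_i(p)\cap S^{-q}|\ge 0$. Thus $\rep(p)\ge \bm{0}$ coordinatewise. Consequently
\[
\sum_{p'\in P^*\setminus\{p\}}\rep(p')=\sum_{p'\in P^*}\rep(p')-\rep(p)\le \sum_{p'\in P^*}\rep(p')\le -\ve^{(c(q))},
\]
so $P^*\setminus\{p\}\subseteq S^*$ also satisfies the balance inequality. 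This contradicts the choice of $P^*$ as a smallest such subset. Therefore $|B_{c(p)}(p)\cap S^{-q}|=0$, which in particular gives the bound $\le 1$ for $i=c(p)$ as well.

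The only delicate point is Step 2. It needs every coordinate of $\rep(p)$ to be nonnegative, not just the own-color coordinate, and this holds because the other coordinates are plain cardinalities. With the intended definition of $\rep$ this is immediate. Under the literal formula with $\1_{[i=c(q)]}$, the $c(q)$-coordinate would need separate care, so I would rely on the interpretation given in the text.
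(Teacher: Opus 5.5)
Your proof is correct and follows the paper's argument exactly. For colors $i\neq c(p)$ you use the triangle inequality against $\beta$-color separation with $\beta=2\alpha$. For the own color you use that $\rep(p)\ge 0$ coordinatewise, so $P^*\setminus\{p\}$ still satisfies the balance inequality, contradicting minimality. Your reading $\rep_i(p)=|B_i(p)\cap S^{-q}|-\1_{[i=c(p)]}$ is the one the paper actually relies on: the next lemma sums this indicator over $S^*$ to get $|S^*_i|$. So the $c(q)$ in the displayed definition is a typo in the paper, not a gap in your argument.
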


\begin{proof}
First suppose $i\neq c(p)$ and
$|B_i(p)\cap S^{-q}|\ge 2$.
Pick distinct $u,v\in B_i(p)\cap S^{-q}$.
Since $u,v$ have the same color,
$$
\frac12\guess\le d(u,v)
\le d(u,p)+d(v,p)
<\frac14\guess+\frac14\guess,
$$
a contradiction.

Now let $i=c(p)$ and suppose
$|B_i(p)\cap S^{-q}|\ge 1$.
Then all coordinates of $\rep(p)$ are non-negative, and hence
$$
\sum_{p'\in P^*\setminus\{p\}}\rep(p')
\le -\rep(p)-\ve^{(c(q))}
\le -\ve^{(c(q))}.
$$
Thus $P^*\setminus\{p\}$ still satisfies the balance inequality,
contradicting the minimality of $P^*$.
\end{proof}

An important corollary is that for all $p\in P^*$,
$B_i(p)\cap S^{-q}$ is either empty or contains a single point,
and therefore, $\rep(p) \in \set{-1,0,1}^m$.
We now describe a useful property of $P^*$:
the inequality in the balance property is indeed an equality.

\begin{algorithm}[t]
    \caption{\sc FindRepair General}
    \label{alg:3repair}
    \begin{algorithmic}[1]
		
        \Require A $\frac{\guess}{2}$-color separated set $S$, a point $q \in S$,
        and two parameters $(\alpha,\beta) = (\frac{\guess}{4},\frac{\guess}{2})$
        \Ensure A $q$-repair set $P$, provided $\guess \leq \OPT$
		
		\For {all $P\subseteq X$ with size at most $c_m$} 			
			\If {$P$ is a valid $q$-repair set}
				\Return $P$
			\EndIf
		\EndFor
		
		\State \Return \None

    \end{algorithmic}
\end{algorithm}

\begin{lemma}\label{lemma: et equal}
	The minimal set $P^*$ satisfying the balance inequality must satisfy
	$$
	\sum_{p\in P^*}\rep(p)=-\ve^{(c(q))}.
	$$
\end{lemma}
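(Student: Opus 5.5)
By \Cref{lemma: vector binary}, every $p\in P^*$ has $\rep(p)\in\set{-1,0,1}^m$. I would argue by contradiction. Suppose $\sigma:=\sum_{p\in P^*}\rep(p)$ satisfies $\sigma\le-\ve^{(c(q))}$ with strict inequality in some coordinate $j$, so $\sigma_j\le -1-\1_{[j=c(q)]}$. The plan is to delete one point of $P^*$ so that the balance inequality still holds. This contradicts the minimality of $P^*$.

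First, the coordinate $c(p)$ of $\rep(p)$ is easy to compute. By \Cref{lemma: vector binary}, $|B_{c(p)}(p)\cap S^{-q}|=0$, so $\rep_{c(p)}(p)=-\1_{[c(p)=c(q)]}$. For $i\ne c(p)$ we have $\rep_i(p)=|B_i(p)\cap S^{-q}|-\1_{[i=c(q)]}$, where the first term is $0$ or $1$.

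I am not sure this formula matches the paper's intended vector. The explanatory sentence says that inserting $p$ supplies one unit of its own color. The intended definition is probably $\rep_i(p)=|B_i(p)\cap S^{-q}|-\1_{[i=c(p)]}$, with the $\1_{[i=c(q)]}$ accounted for once on the right-hand side of the balance inequality. That reading is the one consistent with the previous lemma's computation $\sum_p|B_i(p)\cap S^{-q}|-|S^*_i|$. I will work with it. Then $\rep_{c(p)}(p)=-1$ and $\rep_i(p)\in\set{0,1}$ for $i\ne c(p)$.

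Now take the strict coordinate $j$. Since $\sigma_j<-\1_{[j=c(q)]}\le 0$, and only points of color $j$ contribute a negative amount to coordinate $j$, some $p\in P^*$ has $c(p)=j$. Removing $p$ raises coordinate $j$ by exactly $1$, so it still satisfies $\sigma_j+1\le-\1_{[j=c(q)]}$. In every other coordinate $i$, removing $p$ changes the sum by $-\rep_i(p)\le 0$, so those coordinates remain at most $-\1_{[i=c(q)]}$. Hence $P^*\setminus\{p\}$ satisfies the balance inequality, contradicting minimality.

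The key step is therefore the sign structure: the only negative entry of $\rep(p)$ is at $c(p)$, and it equals $-1$. Under the corrected definition this is exactly what \Cref{lemma: vector binary} provides. The main obstacle I expect is reconciling the definition of $\rep$ as literally written with this structure. If the literal definition is kept, the $-\1_{[i=c(q)]}$ term appears once per point, and the argument (and indeed the previous lemma) must be reinterpreted. So I would first fix the definition to the reading above, check that the previous lemma is unchanged under it, and then run the deletion argument.
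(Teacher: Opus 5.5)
Your proof is correct and follows the paper's argument essentially verbatim. Take a coordinate $j$ where the balance inequality is strict, use that the only negative entry of $\rep(p)$ is $\rep_{c(p)}(p)=-1$ to find $p\in P^*$ of color $j$, and note that deleting it keeps every coordinate within its bound, contradicting minimality. Your reading of the definition is also right: the paper's own proofs only work with $\rep_i(p)=|B_i(p)\cap S^{-q}|-\1_{[i=c(p)]}$. Specifically, the sum over $S^*$ yielding $-|S^*_i|$ and the nonnegativity claim in \Cref{lemma: vector binary} both need this form, so the $\1_{[i=c(q)]}$ in the displayed definition is a typo.
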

\vspace*{-1.8em}
\begin{proof}
	Suppose the inequality is strict in some coordinate $i$. Since the vectors
	are integral,
	$$
	\sum_{p\in P^*}\rep_i(p)\le -1-\1_{[i=c(q)]}.
	$$
	By \Cref{lemma: vector binary}, the only negative entry of $\rep(p)$ is
	$\rep_{c(p)}(p)=-1$. Hence, there exists $p_0\in P^*$ with $c(p_0)=i$.

	For every $j\neq i$, $\rep_j(p_0)\ge0$, so removing $p_0$ preserves the
	balance inequality in coordinate $j$. For coordinate $i$,
	$$
	\sum_{p\in P^*\setminus\{p_0\}}\rep_i(p)
	=
	\sum_{p\in P^*}\rep_i(p)+1
	\le -\1_{[i=c(q)]}.
	$$
	Thus $P^*\setminus\{p_0\}$ still satisfies the balance inequality,
	contradicting the minimality of $P^*$.
\end{proof}

Finally we have enough structure to prove our claim

\begin{lemma}\label{lemma: bound on Pstar}
	There exists a constant $c_m$ (depending only on $m$), such that 
	the minimal subset $P^* \subseteq S^*$ satisfying the balance inequality 
	has at most $c_m$ points.
\end{lemma}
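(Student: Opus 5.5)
The plan is to use a Steinitz-type rearrangement argument. By \Cref{lemma: vector binary}, every $p\in P^*$ has $\rep(p)\in\set{-1,0,1}^m$. In particular $\|\rep(p)\|_\infty\le 1$, and by \Cref{lemma: et equal},
\[
\sum_{p\in P^*}\rep(p)=-\ve^{(c(q))}.
\]
Write $t=|P^*|$ and $v_p=\rep(p)$. Shift the vectors so that they sum to zero: set $w_p=v_p+\frac1t\ve^{(c(q))}$. Then $\sum_p w_p=0$ and $\|w_p\|_\infty\le 2$.

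The Steinitz lemma (in the form for $\ell_\infty$ on $\IR^m$, with constant $m$) gives an ordering $p_1,\dots,p_t$ of $P^*$ with
\[
\Bigl\|\sum_{j\le s} w_{p_j}\Bigr\|_\infty\le 2m \qquad\text{for every prefix length } s.
\]
Since $\|\frac st\ve^{(c(q))}\|_\infty\le 1$, the prefix sums of the original vectors satisfy $\sigma_s:=\sum_{j\le s}v_{p_j}\in\{-2m-1,\dots,2m+1\}^m$. This box contains $N:=(4m+3)^m$ integer points.

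Now suppose $t> N$, so that $t+1$ exceeds $N$ as well. We have $t+1$ prefix sums $\sigma_0=0,\sigma_1,\dots,\sigma_t$, so by pigeonhole two of them coincide: $\sigma_a=\sigma_b$ with $0\le a<b\le t$. The block $p_{a+1},\dots,p_b$ then has zero total repair vector. Remove this block from $P^*$. The remaining set $P'$ is a strict subset, and its sum is still
\[
\sum_{p\in P'}\rep(p)=\sigma_t-(\sigma_b-\sigma_a)=-\ve^{(c(q))},
\]
so $P'$ satisfies the balance inequality. This contradicts the minimality of $P^*$. If the removed block were all of $P^*$, we would get $\sum_{p\in P^*}\rep(p)=0$, which is impossible. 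Therefore $|P^*|\le N$, and we may take $c_m=(4m+3)^m$, which depends only on $m$.

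The main obstacle is handling the correct version of the Steinitz lemma with zero-sum normalization. Applying it directly to the $v_p$ fails because they do not sum to zero. The shift by $\frac1t\ve^{(c(q))}$ fixes this, at the cost of slightly enlarging the norm bound; the bookkeeping on the box size is routine. If preferred, one could avoid the shift by appending the vector $\ve^{(c(q))}$ to the family, making it sum to zero, applying Steinitz, and discarding that element afterwards. Only the ordering and the finite box matter; any explicit constant suffices.
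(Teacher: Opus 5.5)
Your proof is correct and takes essentially the same route as the paper: normalize the repair vectors to a zero-sum family, apply the Steinitz lemma to get an ordering with bounded integer prefix sums, and use pigeonhole to extract a nonempty zero-sum block whose deletion contradicts the minimality of $P^*$. The only difference is the normalization. The paper appends $\ve^{(c(q))}$ to the family (the alternative you mention), which gives the sharper constant $(2m+1)^m-1$; your shift by $\frac1t\ve^{(c(q))}$ costs a larger box, $(4m+3)^m$, but guarantees that the extracted block consists only of repair vectors, so no case analysis on which side contains $\ve^{(c(q))}$ is needed.
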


\begin{proof}
	Consider the vectors $\{\rep(p):p\in P^*\}$ together with $\ve^{(c(q))}$.
	By \Cref{lemma: et equal}, their sum is zero, and by
	\Cref{lemma: vector binary}, each lies in $\{-1,0,1\}^m$ and hence has
	$\ell_\infty$-norm at most $1$.

	By the Steinitz lemma, they can be ordered as $v_1,\ldots,v_t$ so that every
	partial sum $u_r=\sum_{j=1}^r v_j$ lies in $[-m,m]^m$.
	Thus, there are at most $(2m+1)^m$ possible values for the partial sums.

	If $|P^*|>(2m+1)^m-1$, then $t=|P^*|+1>(2m+1)^m$, so two partial sums
	coincide. The vectors between them form a nonempty zero-sum set, and so do
	the remaining vectors. The set not containing $\ve^{(c(q))}$ therefore consists
	of vectors $\rep(p)$ for a nonempty subset $P'\subseteq P^*$ whose sum is
	zero. Hence,
	$$
	\sum_{p\in P^*\setminus P'}\rep(p)=-\ve^{(c(q))},
	$$
	contradicting the minimality of $P^*$. 
	Setting $c_m = (2m+1)^m - 1$ completes the proof.
\end{proof}

By \Cref{lemma: bound on Pstar}, whenever $\guess\le\OPT$, there exists a $q$-repair set of size at most $c_m$. Since $m$ is constant, $c_m$ is also constant, and \Cref{alg:3repair} can enumerate all subsets of size at most $c_m$ and check in polynomial time whether any is a valid $q$-repair. Thus, \Cref{alg:3repair} runs in polynomial time. Plugging this algorithm into our local-repair framework yields a $4$-approximation for fair diversity maximization, proving \Cref{thm:constant-groups}.

% ---------------------------- Two Groups ---------------------------

\section{Tight Bound for Two Groups}
\label{sec:two-groups}

In this section, we present a simple algorithm for finding a $q$-repair with $(\alpha,\beta)=(\frac{\guess}{2},\frac{\guess}{2})$. Combined with our local-repair framework, this algorithm yields a $2$-approximation for fair diversity maximization with two groups.

Throughout this section, we use red and blue to denote the two groups for simplicity.
Accordingly, we denote the corresponding subsets of $X$
by $X_r$ and $X_b$, and the associated color quotas
by $k_r$ and $k_b$, respectively.

\begin{algorithm}[t]
    \caption{\sc FindRepair Two-Groups}
    \label{alg:2repair}
    \begin{algorithmic}[1]
		
        \Require A $\frac{\guess}{2}$-color separated set $S$, a point $q \in S$,
        and two parameters $(\alpha,\beta) = (\frac{\guess}{2},\frac{\guess}{2})$
        \Ensure A $q$-repair $P$, provided $\guess \leq \OPT$

		\State rename colors so that $c(q)$ is red
		
		\If {there exists a red point $p$ outside $B(S^{-q})$} \label{line:2rep-5}
			\State\Return $\set{p}$\label{line:2rep-ret1}
		\EndIf
		
		\State let $R = \set{}$
		\While {a point $v$ exists outiside $B(R\cup S^{-q})$}
			\State $R \gets R \cup \set{v}$ \label{line:2rep-7}
		\EndWhile
		\State choose a red point $p\not\in S^{-q}$ with $|B(p) \cap S^{-q}_r| = 0$
									and $|B(p) \cap (S_b^{-q}\cup R)| \le |R|$ \label{line:2rep-10}
				
        \State\Return $\set{p} \cup (R\setminus B(p))$\label{line:2rep-ret2}
    \end{algorithmic}
\end{algorithm}

Our implementation of $\FindRepair$ for the two-color case is given in \Cref{alg:2repair}. The algorithm takes as input $(\alpha,\beta)=(\frac{\guess}{2},\frac{\guess}{2})$, a $\frac{\guess}{2}$-color-separated set $S$, and a point $q\in S$.
The following three lemmas show that the algorithm correctly returns a $q$-repair,
provided $\guess\le\OPT$.

\begin{lemma}
\label{lemma: same ball}
If $\guess \le \OPT$ and $\alpha \le \beta = \frac{\guess}{2}$, then for any point $q$, we have $|B(q)\cap S^*|\le 1$.
\end{lemma}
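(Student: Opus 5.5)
The plan is a direct triangle-inequality argument by contradiction, in the same spirit as \Cref{lemma: blocker disjoint}. Suppose $|B(q)\cap S^*|\ge 2$, and pick two distinct optimal points $p_1,p_2\in B(q)\cap S^*$.

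The one thing to check is how far each $p_j$ can be from $q$. By the definition of the open ball, $p_j\in B(q)$ means the following:
\begin{itemize}
\item if $c(p_j)\neq c(q)$, then $d(p_j,q)<\alpha$;
\item if $c(p_j)=c(q)$, then $d(p_j,q)<\beta$.
\end{itemize}
In either case $d(p_j,q)<\max(\alpha,\beta)=\beta=\frac{\guess}{2}$, using the hypothesis $\alpha\le\beta=\frac{\guess}{2}$. So, whatever the colors of $p_1$, $p_2$ and $q$, both optimal points lie strictly within distance $\frac{\guess}{2}$ of $q$.

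Since $p_1\neq p_2$ are both in $S^*$ and $\guess\le\OPT=\div(S^*)$, the triangle inequality gives
$$\guess\le\OPT\le d(p_1,p_2)\le d(p_1,q)+d(q,p_2)<\tfrac{\guess}{2}+\tfrac{\guess}{2}=\guess,$$
a contradiction. Hence $|B(q)\cap S^*|\le 1$.

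I expect no real obstacle. The only point needing care is the case split on colors in the definition of $B(q)$, and it is resolved uniformly by the bound $\beta=\frac{\guess}{2}$. Note also that $q$ need not lie in $S$ or $S^*$; the argument works for any point $q$.
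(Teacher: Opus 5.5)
Your proof is correct and follows the paper's argument: two distinct optimal points in $B(q)$ each lie within distance less than $\beta = \guess/2$ of $q$, so the triangle inequality gives $d(p_1,p_2) < \guess \le \OPT$, a contradiction. The only difference is that you spell out the color case split in the definition of $B(q)$, which the paper leaves implicit.
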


\begin{proof}
Suppose, for contradiction, that there exist distinct points $p_1,p_2\in B(q)\cap S^*$. Then
\[
       \OPT \le d(p_1,p_2)
        \le d(p_1,q)+d(q,p_2)
        < 2\beta
        = \guess
        \le \OPT,
\]
which is a contradiction.
\end{proof}

A direct consequence of this lemma is that the sets $B(v)$ are pairwise disjoint for all $v\in S^*$. Intuitively, the points in $S^*$ are pairwise at distance at least $\OPT$, while each ball $B(v)$ has radius at most $\frac{1}{2}\OPT$.

\begin{lemma}
\label{lemma: p exists}
If $\guess \le \OPT$, then a point $p$ satisfying the conditions in \cref{line:2rep-10} of \Cref{alg:2repair} exists.
\end{lemma}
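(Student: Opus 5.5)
The plan is a counting argument. It compares the disjoint balls around the optimal points with the set $R\cup S^{-q}$ produced by \Cref{alg:2repair}. Throughout, $\alpha=\beta=\frac{\guess}{2}$, so $B(p)$ is simply the open ball of radius $\frac{\guess}{2}$ around $p$ intersected with $X$, for every color.

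First I record the structure available when \cref{line:2rep-10} is reached.

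\begin{itemize}
\item Since the test in \cref{line:2rep-5} failed, every red point of $X$ lies in $B(S^{-q})$. Because $d$ is symmetric, $v\in B(u)$ iff $u\in B(v)$. Hence every red point $x$ has $B(x)\cap S^{-q}\neq\emptyset$.
\item Each point added to $R$ lies outside $B(S^{-q})$, so it is blue. Thus $S^{-q}\cup R$ splits as $S_r^{-q}$ together with $T:=S_b^{-q}\cup R$.
\item By maximality of the loop, every point $x\in X$ lies in $B(R\cup S^{-q})$. Equivalently, $B(x)\cap(R\cup S^{-q})\neq\emptyset$.
\item By \Cref{lemma: same ball} and its consequence, the balls $B(p)$ for $p\in S^*$ are pairwise disjoint. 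Each point of $R\cup S^{-q}$ therefore lies in at most one of them. Summing over $S^*$ gives
$$\sum_{p\in S^*}|B(p)\cap(R\cup S^{-q})|\le |R|+|S^{-q}| = |R|+k-1.$$
\end{itemize}

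Let $G$ be the set of red optimal points $p$ with $B(p)\cap S_r^{-q}=\emptyset$. Any $p\in G$ automatically satisfies $p\notin S^{-q}$: otherwise $p$ itself would be a red point of $S^{-q}$ in $B(p)$. So it suffices to find some $p\in G$ with $|B(p)\cap T|\le |R|$.

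Suppose for contradiction that every $p\in G$ has $|B(p)\cap T|\ge |R|+1$. I lower-bound the left side of the displayed sum by splitting $S^*$ into three parts.

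\begin{itemize}
\item Each red optimal point not in $G$ has at least one point of $S_r^{-q}$ in its ball.
\item Each blue optimal point has at least one point of $R\cup S^{-q}$ in its ball, by maximality of $R$.
\item Each $p\in G$ contributes at least $|R|+1$.
\end{itemize}

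This gives the total bound
$$(k-|G|)+|G|(|R|+1)=k+|G|\,|R|.$$

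It remains to show $G\neq\emptyset$. The $k_r$ red optimal points have disjoint balls, while $S_r^{-q}$ has only $k_r-1$ points. Hence at least one red optimal ball misses $S_r^{-q}$, so $|G|\ge 1$.

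With $|G|\ge1$, the sum is at least $k+|R|$, contradicting the upper bound $|R|+k-1$. Therefore some $p\in G$ satisfies both conditions of \cref{line:2rep-10}. The case $|R|=0$ is covered by the same computation.

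The step needing the most care is justifying the per-ball lower bound of one for the points outside $G$. For blue optimal points this rests on the maximality of $R$ and the symmetry of the ball relation. For red optimal points outside $G$ it is immediate from the definition of $G$. The argument also uses that $R$ is entirely blue, so that $T$ and $S_r^{-q}$ partition $R\cup S^{-q}$; this is where the failure of \cref{line:2rep-5} enters.
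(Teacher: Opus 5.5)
Your proof is correct and follows essentially the same argument as the paper. Both use pigeonhole over the $k_r-1$ points of $S_r^{-q}$ to get a red optimal point whose ball misses $S_r^{-q}$, then combine disjointness of the optimal balls with maximality of $R$ to bound $|S^{-q}\cup R|$ and force $|B(p)\cap(S_b^{-q}\cup R)|\le|R|$. The differences are cosmetic: you argue by contradiction over the whole set $G$, and you explicitly verify $p\notin S^{-q}$, which the paper leaves implicit.
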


\begin{proof}
Define $x=|S_b^{-q} \cup R|-k_b$. Since the condition in \cref{line:2rep-5} is not met, all points added to $R$ are blue. Hence $x=|R|$. Moreover, since $q$ is red, we have $|S_r^{-q}|=k_r-1$.

We first show that there exists a red point $p\in S_r^*$ such that
$|B(p)\cap S_r^{-q}|=0$. By \cref{lemma: same ball}, for every $q_1\in S_r^{-q}$,
\[
    |B(q_1)\cap S_r^*|
    \le |B(q_1)\cap S^*|
    \le 1.
\]
Therefore,
\[
    |B(S_r^{-q})\cap S_r^*|
    \le \sum_{q_1\in S_r^{-q}} |B(q_1)\cap S_r^*|
    \le k_r-1.
\]
Since $|S_r^*|=k_r$, there exists some $p\in S_r^*$ such that
$|B(p)\cap S_r^{-q}|=0$.

It remains to show that this point $p$ satisfies
$|B(p)\cap(S_b^{-q}\cup R)|\le x$. By \cref{lemma: same ball}, the sets
$B(v)\cap(S^{-q}\cup R)$ are pairwise disjoint for $v\in S^*$. Moreover, by the construction of $R$, every point $v$ of $S^*$ has at least one point in $B(v)\cap(S^{-q}\cup R)$; otherwise, it would have been added to $R$. Thus,
\begin{align*}
    k_b+k_r+x-1
    &= |S^{-q}\cup R| \\
    &\ge |B(S^*)\cap(S^{-q}\cup R)| \\
    &= \sum_{v\in S^*}|B(v)\cap(S^{-q}\cup R)| \\
    &\ge |S^*\setminus\{p\}|+|B(p)\cap(S^{-q}\cup R)| \\
    &= k_b+k_r-1+|B(p)\cap(S^{-q}\cup R)|.
\end{align*}
Hence
\[
    |B(p)\cap(S^{-q}\cup R)|\le x.
\]
Since $|B(p)\cap S_r^{-q}|=0$, this implies
\[
    |B(p)\cap(S_b^{-q}\cup R)|\le x=|R|.
\]
Thus, $p$ satisfies both required properties. Finally, such a point can be found in polynomial time by checking every red point.
\end{proof}

\begin{lemma}
\label{lemma: 2-repair}
If $\guess\le \OPT$, then \Cref{alg:2repair} returns a valid $q$-repair in polynomial time.
\end{lemma}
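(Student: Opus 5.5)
We handle the two return points of \Cref{alg:2repair} separately, checking the three conditions of a $q$-repair with $\alpha=\beta=\frac{\guess}{2}$ each time, and then argue polynomial running time.

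\emph{First return (\cref{line:2rep-ret1}).} Here $P=\set{p}$ with $p$ red and $p\notin B(S^{-q})$. Diversity and color separation of a singleton are trivial. Since $\alpha=\beta$, the balls are symmetric: $v\in B(p)$ iff $d(v,p)<\frac{\guess}{2}$ iff $p\in B(v)$. So $p\notin B(S^{-q})$ gives $\Gamma^q(p)=\emptyset$. The balance condition then reads $0+1\le 1$ for red and $0\le 0$ for blue, so it holds.

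\emph{Second return (\cref{line:2rep-ret2}).} Let $P=\set{p}\cup(R\setminus B(p))$, with $p$ supplied by \Cref{lemma: p exists}. By the same symmetry, each point added to $R$ is at distance at least $\frac{\guess}{2}$ from every point of $S^{-q}$ and from the earlier points of $R$. Hence $\div(R)\ge\frac{\guess}{2}$ and $B(R)\cap S^{-q}=\emptyset$. Points of $R\setminus B(p)$ are at distance at least $\frac{\guess}{2}$ from $p$, so $\div(P)\ge\frac{\guess}{2}$, which gives both the diversity and the color-separation conditions.

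For the conflict sets, $\Gamma^q(P)=\Gamma^q(p)\cup\Gamma^q(R\setminus B(p))=B(p)\cap S^{-q}$, since $R$ conflicts with nothing in $S^{-q}$. By the choice of $p$, $B(p)\cap S^{-q}_r=\emptyset$, so $|\Gamma^q_r(P)|+1=1\le |P_r|$, because $p$ is red. Recall that $R$ is entirely blue, as shown in the proof of \Cref{lemma: p exists}. Then $|P_b|=|R|-|B(p)\cap R|$ and $|\Gamma^q_b(P)|=|B(p)\cap S_b^{-q}|$. The blue balance condition is therefore equivalent to $|B(p)\cap S_b^{-q}|+|B(p)\cap R|\le |R|$. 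Since $R$ and $S^{-q}$ are disjoint, this is exactly the condition $|B(p)\cap(S_b^{-q}\cup R)|\le|R|$ guaranteed in \cref{line:2rep-10}.

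This bookkeeping is the main (mild) obstacle. It needs three facts together: that $R\cap S^{-q}=\emptyset$, that $R$ is blue, and that the symmetry of the balls makes the conflict set of $R$ empty.

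\emph{Running time.} Each while-iteration adds a new point, so there are at most $n$ iterations, each checkable in polynomial time. \Cref{line:2rep-10} scans all red points, and \Cref{lemma: p exists} guarantees that one qualifies.
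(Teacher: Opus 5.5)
Your proposal is correct and takes essentially the same route as the paper: both split on the two return lines, show $\Gamma_r^q(P)=\emptyset$, reduce the blue balance condition to the guarantee $|B(p)\cap(S_b^{-q}\cup R)|\le|R|$ from line~10 using that $R$ is blue and disjoint from $S^{-q}$, and check compatibility by removing $B(p)$ from $R$. You also state explicitly that the balls are symmetric when $\alpha=\beta$, which the paper uses implicitly (for example, when concluding that $B(R)\cap S^{-q}=\emptyset$).
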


\begin{proof}
The algorithm terminates in either \cref{line:2rep-ret1} or \cref{line:2rep-ret2}.
First, suppose that the algorithm terminates in \cref{line:2rep-ret1}. Then $P=\set{p}$, where $p\notin B(S^{-q})$. Hence, $|\Gamma_i^q(P)|=0$ for both colors $i$. Since $p$ and $q$ have the same color, we have $|P_r|=1$, and thus the color condition for red is satisfied. The color condition for blue is immediate, since $P_b=\emptyset$ and $\Gamma_b^q(P)=\emptyset$. Moreover, since $P$ contains only one point, it is trivially $(\frac{\guess}{2},\frac{\guess}{2})$-compatible. Thus, $P$ is a valid $q$-repair.

Now suppose that the algorithm terminates in \cref{line:2rep-ret2}. By \Cref{lemma: p exists}, there exists a desirable point $p$ satisfying
\[
    |B(p)\cap S_r^{-q}|=0
    \qquad\text{and}\qquad
    |B(p)\cap(S_b^{-q}\cup R)|\le |R|.
\]
The algorithm returns
\[
    P=\set{p}\cup(R\setminus B(p)).
\]
By the construction of $R$, $B(R)$ contains no point of $S^{-q}$. Moreover, $B(p)\cap S^{-q}$ contains only blue points. Therefore,
\[
    |\Gamma_r^q(P)|=0.
\]
Since $|P_r|=1$, the red color condition follows:
\[
    |P_r|=1\ge 1+|\Gamma_r^q(P)|.
\]

For the blue color, we have
\begin{align*}
    |P_b|
    &=|R\setminus B(p)|\\
    &=|R|-|B(p)\cap R|\\
    &=|R|+|B(p)\cap S_b^{-q}|
      -|B(p)\cap(S_b^{-q}\cup R)|\\
    &\ge |B(p)\cap S_b^{-q}|.
\end{align*}
Since $p\in P$ and $R\setminus B(p)$ contains no point of $B(p)$,
\[
    |B(p)\cap S_b^{-q}|
    =|B(P)\cap S_b^{-q}|
    =|\Gamma_b^q(P)|.
\]
Hence
\[
    |P_b|\ge |\Gamma_b^q(P)|,
\]
and the blue color condition is satisfied.

It remains to verify compatibility. By construction, the points in $R$ are mutually compatible and have no point of $S^{-q}$ in their respective balls. When $p$ is added to $P$, all points in $B(p)$ are removed from $R$. Thus, every point remaining in $R\setminus B(p)$ is compatible with $p$, and consequently $P$ is $(\frac{\guess}{2},\frac{\guess}{2})$-compatible.

Therefore, in either case, \Cref{alg:2repair} returns a valid $q$-repair set. All steps of the algorithm, including the required distance and neighborhood checks, can be performed in polynomial time.
\end{proof}

Plugging \Cref{alg:2repair} into our local-repair framework yields a $2$-approximation for fair diversity maximization with two groups, thereby proving \Cref{thm:two-groups}.

% ---------------------------- Conclusion --------------------------

\section{Conclusion}

We studied fair diversity maximization under group-representation
constraints and showed that the previously known dependence on the number of
groups can be avoided when $m$ is an arbitrary constant. 
Our main result is a polynomial-time
$4$-approximation for any constant number of groups, with an approximation
guarantee independent of $m$. We also obtained a tight $2$-approximation for
the special case of two groups.
Both results are based on a unified local-repair framework that preserves the
group quotas exactly while gradually improving the diversity objective. 

Having eliminated the dependence on $m$ from the approximation factor, a natural direction for future work is to determine whether the factor $4$ can be further improved. In particular, our tight $2$-approximation for two groups raises the question of whether the optimal factor of $2$ can be achieved for any constant number of groups. Establishing such a tight approximation guarantee even for small values of $m$ remains an intriguing open problem.

% ---------------------------- AI Use Statement --------------------------

\subsection*{AI use statement}

We used AI-assisted tools to improve the clarity, grammar, and presentation of the manuscript. In particular, these tools were used for proofreading and rephrasing portions of the text. All mathematical results, algorithms, proofs, and claims were developed and verified by the authors. The authors take full responsibility for the content of the paper.

% ---------------------------- Bibliography --------------------------

\newcommand{\etalchar}[1]{$^{#1}$}


\begin{thebibliography}{ZMGMG26}

\bibitem[AARZZ26]{ahmadvand2026composable}
Ali Ahmadvand, Mohammad Ansari, Mobin Razavi, and Hamid Zarrabi-Zadeh.
\newblock Composable coresets for fair diversity maximization.
\newblock In {\em Proceedings of the 38th ACM Symposium on Parallelism in
  Algorithms and Architectures}, pages 561--570, 2026.

\bibitem[AAT22]{anahideh2022fair}
Hadis Anahideh, Abolfazl Asudeh, and Saravanan Thirumuruganathan.
\newblock Fair active learning.
\newblock {\em Expert Systems with Applications}, 199:116981, 2022.

\bibitem[AAYIM13]{abbar2013real}
Sofiane Abbar, Sihem Amer-Yahia, Piotr Indyk, and Sepideh Mahabadi.
\newblock Real-time recommendation of diverse related articles.
\newblock In {\em Proceedings of the 22nd International Conference on World
  Wide Web (WWW)}, pages 1--12, 2013.

\bibitem[ABC{\etalchar{+}}02]{ashton2002diverse}
Mark Ashton, John Barnard, Florence Casset, Michael Charlton, Geoffrey Downs,
  Dominique Gorse, John Holliday, Roger Lahana, and Peter Willett.
\newblock Identification of diverse database subsets using property-based and
  fragment-based molecular descriptions.
\newblock {\em Quantitative Structure-Activity Relationships}, 21(6):598--604,
  2002.

\bibitem[AFZZ15]{aghamolaei2015diversity}
Sepideh Aghamolaei, Majid Farhadi, and Hamid Zarrabi-Zadeh.
\newblock Diversity maximization via composable coresets.
\newblock In {\em Proceedings of the 27th Canadian Conference on Computational
  Geometry}, pages 38--48, 2015.

\bibitem[Ahm19]{ahmed2019data}
Mohiuddin Ahmed.
\newblock Data summarization: a survey.
\newblock {\em Knowledge and Information Systems}, 58(2):249--273, 2019.

\bibitem[AMMM22]{addanki2022improved}
Raghavendra Addanki, Andrew McGregor, Alexandra Meliou, and Zafeiria
  Moumoulidou.
\newblock Improved approximation and scalability for fair max-min
  diversification.
\newblock In {\em Proceedings of the 25th International Conference on Database
  Theory}, pages 7:1--7:21, 2022.

\bibitem[AMT13]{abbassi2013diversity}
Zeinab Abbassi, Vahab~S Mirrokni, and Mayur Thakur.
\newblock Diversity maximization under matroid constraints.
\newblock In {\em Proceedings of the 19th ACM SIGKDD International Conference
  on Knowledge Discovery and Data Mining}, pages 32--40, 2013.

\bibitem[ASAZ25]{alvar2025divprune}
Saeed~Ranjbar Alvar, Gursimran Singh, Mohammad Akbari, and Yong Zhang.
\newblock {DivPrune}: Diversity-based visual token pruning for large multimodal
  models.
\newblock In {\em Proceedings of the IEEE/CVF Conference on Computer Vision and
  Pattern Recognition}, pages 9392--9401, 2025.

\bibitem[AT25]{adriaens2025few}
Florian Adriaens and Nikolaj Tatti.
\newblock Fair diversity maximization with few representatives.
\newblock In {\em Proceedings of the 31st ACM SIGKDD International Conference
  on Knowledge Discovery and Data Mining}, pages 17--25, 2025.

\bibitem[BEL{\etalchar{+}}19]{borassi2019better}
Michele Borassi, Alessandro Epasto, Silvio Lattanzi, Sergei Vassilvitskii, and
  Morteza Zadimoghaddam.
\newblock Better sliding window algorithms to maximize subadditive and
  diversity objectives.
\newblock In {\em Proceedings of the 38th ACM Symposium on Principles of
  Database Systems (PODS)}, pages 254--268, 2019.

\bibitem[BJLY17]{borodin2017max}
Allan Borodin, Aadhar Jain, Hyun~Chul Lee, and Yuli Ye.
\newblock Max-sum diversification, monotone submodular functions, and dynamic
  updates.
\newblock {\em ACM Transactions on Algorithms}, 13(3):41:1--41:25, 2017.

\bibitem[CHV18]{celis18}
L.~Elisa Celis, Lingxiao Huang, and Nisheeth~K. Vishnoi.
\newblock Multiwinner voting with fairness constraints.
\newblock In {\em Proceedings of the 27th International Joint Conference on
  Artificial Intelligence}, pages 144--151, 2018.

\bibitem[CKS{\etalchar{+}}18]{celis2018fair}
Elisa Celis, Vijay Keswani, Damian Straszak, Amit Deshpande, Tarun Kathuria,
  and Nisheeth Vishnoi.
\newblock Fair and diverse {DPP}-based data summarization.
\newblock In {\em Proceedings of the 37th International Conference on Machine
  Learning (ICML)}, pages 716--725, 2018.

\bibitem[CLLW16]{chen2016matroid}
Danny~Z Chen, Jian Li, Hongyu Liang, and Haitao Wang.
\newblock Matroid and knapsack center problems.
\newblock {\em Algorithmica}, 75:27--52, 2016.

\bibitem[CPP18]{ceccarello2018fast}
Matteo Ceccarello, Andrea Pietracaprina, and Geppino Pucci.
\newblock Fast coreset-based diversity maximization under matroid constraints.
\newblock In {\em Proceedings of the 11th ACM International Conference on Web
  Search and Data Mining}, pages 81--89, 2018.

\bibitem[CPP20]{ceccarello2020general}
Matteo Ceccarello, Andrea Pietracaprina, and Geppino Pucci.
\newblock A general coreset-based approach to diversity maximization under
  matroid constraints.
\newblock {\em ACM Transactions on Knowledge Discovery from Data}, 14(5):1--27,
  2020.

\bibitem[CPPU17]{ceccarello2016mapreduce}
Matteo Ceccarello, Andrea Pietracaprina, Geppino Pucci, and Eli Upfal.
\newblock Mapreduce and streaming algorithms for diversity maximization in
  metric spaces of bounded doubling dimension.
\newblock {\em Proceedings of the VLDB Endowment}, 10(5):469--480, 2017.

\bibitem[DP10]{drosou2012disc}
Marina Drosou and Evaggelia Pitoura.
\newblock Search result diversification.
\newblock {\em ACM SIGMOD Record}, 39(1):41--47, 2010.

\bibitem[DRSA24]{dehghankar2024fair}
Mohsen Dehghankar, Rahul Raychaudhury, Stavros Sintos, and Abolfazl Asudeh.
\newblock Fair set cover.
\newblock In {\em Proceedings of the 31st ACM SIGKDD International Conference
  on Knowledge Discovery and Data Mining}, pages 189--200, 2024.

\bibitem[EN89]{erkut1989analytical}
Erhan Erkut and Susan Neuman.
\newblock Analytical models for locating undesirable facilities.
\newblock {\em European Journal of Operational Research}, 40(3):275--291, 1989.

\bibitem[Erk90]{erkut1990discrete}
Erhan Erkut.
\newblock The discrete $p$-dispersion problem.
\newblock {\em European Journal of Operational Research}, 46(1):48--60, 1990.

\bibitem[FCX{\etalchar{+}}26]{fan2026flashvid}
Ziyang Fan, Keyu Chen, Ruilong Xing, Yulin Li, Li~Jiang, and Zhuotao Tian.
\newblock {FlashVID}: Efficient video large language models via training-free
  tree-based spatiotemporal token merging.
\newblock In {\em International Conference on Learning Representations (ICLR)
  (Poster)}, 2026.

\bibitem[FRZ{\etalchar{+}}25]{fahrbach2025gist}
Matthew Fahrbach, Srikumar Ramalingam, Morteza Zadimoghaddam, Sara Ahmadian,
  Gui Citovsky, and Giulia DeSalvo.
\newblock {GIST}: Greedy independent set thresholding for max-min
  diversification with submodular utility.
\newblock In {\em Advances in Neural Information Processing Systems (NeurIPS)},
  volume~38, pages 46322--46349, 2025.

\bibitem[GB24]{gaofair}
Ashley~Qianxi Gao and Allan Borodin.
\newblock Fair max-min diversification in refined and relaxed metric spaces.
\newblock Preprint, 2024.

\bibitem[GSB20]{garcia2020fair}
David Garc{\'\i}a-Soriano and Francesco Bonchi.
\newblock Fair-by-design matching.
\newblock {\em Data Mining and Knowledge Discovery}, 34:1291--1335, 2020.

\bibitem[HKP25]{hassan2025rcap}
Atif Hassan, Swanand Khare, and Jiaul~H. Paik.
\newblock {RCAP}: Robust, class-aware, probabilistic dynamic dataset pruning.
\newblock In {\em Proceedings of the Forty-first Conference on Uncertainty in
  Artificial Intelligence}, volume 286 of {\em Proceedings of Machine Learning
  Research}, pages 1648--1662. PMLR, 2025.

\bibitem[HMN{\etalchar{+}}20]{HalabiMNTT20}
Marwa~El Halabi, Slobodan Mitrovic, Ashkan Norouzi{-}Fard, Jakab Tardos, and
  Jakub Tarnawski.
\newblock Fairness in streaming submodular maximization: Algorithms and
  hardness.
\newblock In {\em Proceedings of the 33rd Advances in Neural Information
  Processing Systems (NeurIPS)}, pages 9150--9171, 2020.

\bibitem[HRT97]{hassin1997approximation}
Refael Hassin, Shlomi Rubinstein, and Arie Tamir.
\newblock Approximation algorithms for maximum dispersion.
\newblock {\em Operations research letters}, 21(3):133--137, 1997.

\bibitem[HZZ23]{haqi2023almost}
Alireza Haqi and Hamid Zarrabi-Zadeh.
\newblock Almost optimal massively parallel algorithms for $k$-center
  clustering and diversity maximization.
\newblock In {\em Proceedings of the 35th ACM Symposium on Parallel Algorithms
  and Architectures}, pages 239--247, 2023.

\bibitem[IMMM14]{indyk2014composable}
Piotr Indyk, Sepideh Mahabadi, Mohammad Mahdian, and Vahab~S Mirrokni.
\newblock Composable core-sets for diversity and coverage maximization.
\newblock In {\em Proceedings of the 33rd ACM Symposium on Principles of
  Database Systems (PODS)}, pages 100--108, 2014.

\bibitem[JNN20]{jones2020fair}
Matthew Jones, Huy Nguyen, and Thy Nguyen.
\newblock Fair $k$-centers via maximum matching.
\newblock In {\em Proceedings of the 39th International Conference on Machine
  Learning (ICML)}, pages 4940--4949, 2020.

\bibitem[KAM19]{kleindessner2019fair}
Matth{\"a}us Kleindessner, Pranjal Awasthi, and Jamie Morgenstern.
\newblock Fair $k$-center clustering for data summarization.
\newblock In {\em Proceedings of the 38th International Conference on Machine
  Learning (ICML)}, pages 3448--3457, 2019.

\bibitem[KP17]{kunaver2017diversity}
Matev{\v{z}} Kunaver and Toma{\v{z}} Po{\v{z}}rl.
\newblock Diversity in recommender systems: {A survey}.
\newblock {\em Knowledge-Based Systems}, 123:154--162, 2017.

\bibitem[KSW{\etalchar{+}}24]{kurkure2024faster}
Yash Kurkure, Miles Shamo, Joseph Wiseman, Sainyam Galhotra, and Stavros
  Sintos.
\newblock Faster algorithms for fair max-min diversification in
  {$\mathrm{R}^{\mathrm{d}}$}.
\newblock {\em Proceedings of the ACM on Management of Data}, 2(3):1--26, 2024.

\bibitem[LKR24]{liu2024tsds}
Zifan Liu, Amin Karbasi, and Theodoros Rekatsinas.
\newblock {TSDS}: Data selection for task-specific model finetuning.
\newblock In {\em Advances in Neural Information Processing Systems (NeurIPS)},
  volume~37, pages 10117--10147, 2024.

\bibitem[LW25]{li2025individual}
Ruien Li and Yanhao Wang.
\newblock Individually fair diversity maximization.
\newblock In {\em Advances in Neural Information Processing Systems (NeurIPS)},
  volume~38, pages 160253--160278, 2025.

\bibitem[MC84]{moon1984analysis}
I~Douglas Moon and Sohail~S Chaudhry.
\newblock An analysis of network location problems with distance constraints.
\newblock {\em Management Science}, 30(3):290--307, 1984.

\bibitem[MMM21]{moumoulidou2020diverse}
Zafeiria Moumoulidou, Andrew McGregor, and Alexandra Meliou.
\newblock Diverse data selection under fairness constraints.
\newblock In {\em Proceedings of the 24th International Conference on Database
  Theory}, pages 13:1--13:25, 2021.

\bibitem[MT23]{mahabadi2023core}
Sepideh Mahabadi and Stojan Trajanovski.
\newblock Core-sets for fair and diverse data summarization.
\newblock In {\em Advances in Neural Information Processing Systems (NeurIPS)},
  volume~36, pages 78987--79011, 2023.

\bibitem[PPP25]{pellizzoni2025dynamic}
Paolo Pellizzoni, Andrea Pietracaprina, and Geppino Pucci.
\newblock Fully dynamic clustering and diversity maximization in doubling
  metrics.
\newblock {\em ACM Transactions on Knowledge Discovery from Data}, 19(4):1--45,
  2025.

\bibitem[RRT91]{ravi1991facility}
Sekharipuram~S Ravi, Daniel~J Rosenkrantz, and Giri~Kumar Tayi.
\newblock Facility dispersion problems: Heuristics and special cases.
\newblock In {\em Proceedings of the 17th Workshop on Algorithms and Data
  Structures}, pages 355--366, 1991.

\bibitem[RWRJ21]{rolf2021representation}
Esther Rolf, Theodora~T. Worledge, Benjamin Recht, and Michael~I. Jordan.
\newblock Representation matters: Assessing the importance of subgroup
  allocations in training data.
\newblock In {\em Proceedings of the 38th International Conference on Machine
  Learning (ICML)}, volume 139 of {\em Proceedings of Machine Learning
  Research}, pages 9040--9051. PMLR, 2021.

\bibitem[SS18]{sener2018active}
Ozan Sener and Silvio Savarese.
\newblock Active learning for convolutional neural networks: A core-set
  approach.
\newblock In {\em International Conference on Learning Representations (ICLR)
  (Poster)}, 2018.

\bibitem[WFM21]{wang2021fairrepresentative}
Yanhao Wang, Francesco Fabbri, and Michael Mathioudakis.
\newblock Fair and representative subset selection from data streams.
\newblock In {\em Proceedings of the Web Conference 2021}, pages 1340--1350,
  2021.

\bibitem[WFML23]{wang2023fair}
Yanhao Wang, Francesco Fabbri, Michael Mathioudakis, and Jia Li.
\newblock Fair max-min diversity maximization in streaming and sliding-window
  models.
\newblock {\em Entropy}, 25(7):1066, 2023.

\bibitem[WMLF23]{wang2023max}
Yanhao Wang, Michael Mathioudakis, Jia Li, and Francesco Fabbri.
\newblock Max-min diversification with fairness constraints: Exact and
  approximation algorithms.
\newblock In {\em Proceedings of the 2023 SIAM International Conference on Data
  Mining}, pages 91--99, 2023.

\bibitem[WWY{\etalchar{+}}25]{wang2025demonstration}
Xubin Wang, Jianfei Wu, Yuan Yichen, Deyu Cai, Mingzhe Li, and Weijia Jia.
\newblock Demonstration selection for in-context learning via reinforcement
  learning.
\newblock In {\em Proceedings of the 42nd International Conference on Machine
  Learning (ICML)}, volume 267 of {\em Proceedings of Machine Learning
  Research}, pages 64937--64954. PMLR, 2025.

\bibitem[ZGMZ17]{zadeh2017scalable}
Sepehr Zadeh, Mehrdad Ghadiri, Vahab Mirrokni, and Morteza Zadimoghaddam.
\newblock Scalable feature selection via distributed diversity maximization.
\newblock In {\em Proceedings of the 31st AAAI Conference on Artificial
  Intelligence}, pages 2876--2883, 2017.

\bibitem[ZMGMG26]{zheng2026tabu}
He~Zheng, Anna Mart{\'i}nez-Gavara, Rafael Mart{\'i}, and Fred Glover.
\newblock Advanced probabilistic tabu search for the fair max-min diversity
  problem.
\newblock {\em Computers \& Operations Research}, 196:107650, 2026.

\end{thebibliography}
\end{document}